\documentclass[conference]{IEEEtran}
\IEEEoverridecommandlockouts

\usepackage{amsmath,graphicx}
\usepackage{url}
\usepackage{array}
\usepackage{soul}
\usepackage{textcomp}
\usepackage{stfloats}
\usepackage{verbatim}
\usepackage{cite}
\usepackage{numprint}
\npthousandsep{\,}
\usepackage[inline]{enumitem}
\usepackage{tcolorbox}
\usepackage{float}
\usepackage[font=small,skip=0pt]{caption}
\usepackage{makecell}
\usepackage{arydshln}
\usepackage{times}
\usepackage{bm,amssymb,amsthm,amsfonts}
\usepackage{listings}%
\usepackage{algorithmicx,algorithm,algpseudocode}
\usepackage[italic]{mathastext} 
\usepackage{booktabs}
\usepackage{setspace,hyperref}
\usepackage{color}
\usepackage{etoolbox}
\usepackage{multirow}
\usepackage{pgfkeys}
\usepackage{cuted}
\usepackage{mathtools}
\usepackage[labelformat=simple]{subcaption}

\usepackage[normalem]{ulem}
\useunder{\uline}{\ul}{}
\usepackage{cancel}
\usepackage{tkz-tab}
\usepackage{latexsym}
%
%
\newtheorem{thm}{Theorem}

\newtheorem{coro}{Corollary}

\newtheorem{defi}{Definition}

\DeclareMathOperator{\vect}{vec}

\begin{document}

\title{A Hierarchical View of Structured Sparsity in Kronecker Compressive Sensing}

\author{Yanbin He and Geethu Joseph\\
Signal Processing Systems Group, Delft University of Technology, The Netherlands\\Emails: \{y.he-1,g.joseph\}@tudelft.nl}

\maketitle

\begin{abstract}
Kronecker compressed sensing refers to using Kronecker product matrices as sparsifying bases and measurement matrices in compressed sensing. This work focuses on the Kronecker compressed sensing problem, encompassing three sparsity structures: $(i)$ a standard sparsity model with arbitrarily positioned nonzero entries, $(ii)$ a hierarchical sparsity model where nonzero entries are concentrated in a few blocks, each with only a subset of nonzero entries, and $(iii)$ a Kronecker-supported sparsity model where the support vector is a Kronecker product of smaller vectors. We present a hierarchal view of Kronecker compressed sensing that explicitly reveals a multiple-level sparsity pattern. This framework allows us to utilize the Kronecker structure of dictionaries and design a two-stage sparse recovery algorithm for different sparsity models. Further, we analyze the restricted isometry property of Kronecker-structured matrices under different sparsity models. Simulations show that our algorithm offers comparable recovery performance to state-of-the-art methods while significantly reducing runtime.
\end{abstract}

\begin{IEEEkeywords}
Compressed sensing, sparse Bayesian learning, restricted isometry property, hierarchical sparsity
\end{IEEEkeywords}

\section{Introduction}
\label{sec:intro}

Multidimensional signals arise in various applications such as image processing \cite{duarte2011kronecker,caiafa2012block,caiafa2013computing,caiafa2013multidimensional} and wireless communications \cite{zhou2017low,wang2024tensor,he2023bayesian},  representing data across multiple dimensions. For example, images are intrinsically two-dimensional arrays (matrices) \cite{friedland2014compressive}, and wireless signals are dependent on the signal's angles of departure and arrival~\cite{he2022structure,he2023bayesian}. Moreover, these signals admit sparse representations in an appropriate basis, such as the discrete Fourier basis for images and angular domain in communications, enabling compressed sensing-based acquisition and recovery. Due to the physical nature of multidimensional signals, it is natural to measure the signal dimension-wise. This induces the Kronecker compressed sensing, where a Kronecker product matrix is utilized to characterize different dimensions of multidimensional signal \cite{duarte2011kronecker}. The signal model in the Kronecker compressed sensing is
\begin{equation}\label{eq.problem_basic}
    \bm y = \bm H \bm x + \bm n = \left(\bm H_1 \otimes \bm H_2\right) \bm x + \bm n,
\end{equation}
where $\bm H_1\in \mathbb{R}^{M_1\times N_1}$, $\bm H_2\in \mathbb{R}^{M_2\times N_2}$, $\bm H \in \mathbb{R}^{\bar{M} \times \bar{N}}$, $\bm x \in \mathbb{R}^{\bar{N}}$, $\bm y \in \mathbb{R}^{\bar{M}}$, and $\bm n$ is the noise, with $\bar{M}=M_1M_2$ and $\bar{N}=N_1N_2$. The goal is to recover the sparse vector $\bm x$ from noisy measurement $\bm y$ given the Kronecker product dictionary $\bm H = \bm H_1 \otimes \bm H_2$.

Furthermore, in many applications, sparse vector $\bm x$ can also exhibit additional structures. One such structure is \emph{hierarchically structured} sparsity wherein the sparse vector $\bm{x}$ in~\eqref{eq.problem_basic} is partitioned into $N_1$ blocks, each of length $N_2$. The vector $\bm{x}$ follows the $(s_1, s_2)$-hierarchical sparsity if only $s_1$ among the $N_1$ blocks are nonzero and each of these non-zero blocks is $s_2$-sparse. This structure commonly appears in channel estimation for massive multiple-input multiple-output systems \cite{roth2020reliable}. Further, if all nonzero blocks share a common support, the sparse vector is said to exhibit the $(s_1, s_2)$-Kronecker-supported sparsity. This term comes from the fact that the support of $\bm{x}$ can be expressed as the Kronecker product of two binary support vectors. This structure appears in wireless communications \cite{xu2022sparse,he2022structure,he2025kronecker} and image processing \cite{caiafa2012block,caiafa2013computing,caiafa2013multidimensional}. Motivated by the different sparsity patterns in Kronecker compressed sensing, we focus on recovering $\bm{x}$ from $\bm{y}$ in~\eqref{eq.problem_basic}, leveraging the prior knowledge of its sparsity pattern and the Kronecker matrix $\bm{H}$. Particularly, we examine three models: the standard sparsity, hierarchical sparsity, and Kronecker-supported sparsity models.

Research on sparse recovery with Kronecker compressed sensing often focuses on algorithms and guarantees. Standard sparsity models typically employ $\ell_1$-norm-based algorithms~\cite{duarte2010kronecker} or greedy algorithms~\cite{foucart2013mathematical}, that do not exploit the Kronecker structure of $\bm H$. However, some specialized algorithms have been designed for hierarchical sparsity and Kronecker-supported sparsity models. A notable example is a hard-thresholding pursuit-based algorithm for hierarchical sparsity model~\cite{roth2020reliable}. While it explicitly enforces $(s_1,s_2)$-hierarchical sparsity using a specific thresholding operator, it does not exploit the Kronecker structure of $\bm{H}$ and requires the true value of $s_1$ and $s_2$. For Kronecker-supported sparsity models, there are multiple approaches such as greedy algorithm \cite{caiafa2013computing} and Bayesian algorithms \cite{chang2021sparse,he2023bayesian}. The state-of-the-art is a sparse Bayesian learning (SBL)-based algorithm for Kronecker-supported sparse recovery~\cite{he2023bayesian}. This work explicitly enforces the Kronecker-structured support of $\bm{x}$ and leverages the Kronecker structure of $\bm{H}$ to reduce complexity \cite{he2022structure}. However, the algorithm still suffers from high complexity. Thus, we aim to design a generalized framework with more efficient algorithms by fully exploiting the Kronecker structure of $\bm H$.

Apart from algorithm development, theoretical guarantees have also been studied for the Kronecker compressed sensing, which mostly focuses on the \emph{restricted isometry property} (RIP)-based analysis~\cite{candes2008introduction}. The standard RIP analysis establishes the robust recovery of standard compressed sensing algorithms like basis pursuit and iterative hard thresholding. The RIP analysis is also extended to Kronecker-structured dictionaries for standard sparsity \cite{duarte2010kronecker,duarte2011kronecker,duarte2011kroneckertech} and hierarchical sparsity~\cite{roth2020reliable}. However, the RIP analysis of Kronecker-supported sparse vectors is missing in the literature. Further, there is no unified discussion for the RIP of Kronecker-structured matrices regarding sparse vectors with patterns. Our framework provides an integrated approach for the RIP analysis for Kronecker product matrix $\bm H$ with respect to sparse vector $\bm x$ with different patterns. 


In this paper, we present a \emph{hierarchical view} on Kronecker compressed sensing, capturing its dimension-wise signal acquisition and explicitly revealing multi-level sparsity patterns. Our contributions are twofold.
\begin{itemize}[leftmargin = *]
    \item \emph{Algorithm Design}: We design a two-stage sparse recovery algorithm that leverages the Kronecker structure of the dictionary via the hierarchical view, reducing complexity while maintaining competitive recovery performance.

    \item \emph{Theoretical Analyses}: Using the hierarchical view, we reexamine the RIP analysis for Kronecker compressed sensing, providing a generalized result that encompasses structured sparsity, with the three sparsity models as special cases.
\end{itemize}
Overall, we introduce a new perspective that bridges different sparsity models in Kronecker compressed sensing, leading to efficient algorithms and unified analysis.


\section{Hierarchical View and Two-Stage Sparse Recovery Algorithm}\label{sec.hie_mea_scheme}
This section explores the hierarchical view of Kronecker compressed sensing and a recovery approach based on it.

\subsection{Hierarchical View}\label{sec.hir_view}


Our hierarchical view relies on the structure of the measurement system in the Kronecker compressed sensing model~\eqref{eq.problem_basic}, where the Kronecker-structured matrix $\bm H$ has two factor matrices, $\bm H_1$ and $\bm H_2$. These factor matrices operate at different levels: $\bm H_1$ captures block-level while $\bm H_2$ focuses on intra-block, following a hierarchical structure. To illustrate this view, we first partition $\bm x$ into $N_1$ blocks of size $N_2$. Let the $i$th block be denoted by $\bm x_i\in\mathbb{R}^{N_2}$. Then, we can rearrange $\bm x$ into $\bm X\in \mathbb{R}^{N_2\times N_1}$ such that the $i$th column of $\bm X$ is $\bm x_i$. Similarly, we can rearrange $\bm y$ and $\bm n$ into matrices $\bm Y\in \mathbb{R}^{M_2\times M_1}$ and $\bm N\in \mathbb{R}^{M_2\times M_1}$, such that $\vect(\bm Y) = \bm y$ and $\vect(\bm N) = \bm n$, respectively. Here, $\vect(\cdot)$ denotes vectorization. Since $\vect(\bm H_2\bm X\bm H_1^\top) = \left(\bm H_1 \otimes \bm H_2\right)\bm x$, we obtain  
\begin{equation}\label{eq.step1}
    \bm Y^\top = \bm H_1 \left(\bm H_2 \bm X \right)^\top + \bm N^\top,
\end{equation}
where the $i$th row of $\bm Y$ represents the $i$th row of $\bm H_2 \bm X$ measured by $\bm H_1$. Also, the $i$th row of $\bm H_2 \bm X$ corresponds to the block $\bm x_i$ measured by $\bm H_2$. Therefore, $\bm H_1$ measures at a higher level by operating on the rows of $\bm H_2 \bm X$, effectively capturing the sparsity structure at the block level. 

The above perspective can also be interpreted directly from~\eqref{eq.problem_basic}. Recall that the Kronecker product matrix $\bm H$ possesses a column-block structure with a repetitive block pattern along its columns. Here, each block of columns is obtained by taking the Kronecker product of a column of $\bm H_1$ with $\bm H_2$. Also, the column-block structure of $\bm H$ matches with the blocks of $\bm x$. Hence, in this hierarchical framework, $\bm H_2$ first measures each block of $\bm x$. The resulting measurements of all blocks are then processed by $\bm H_1$, which captures information at a higher, global level. The relation \eqref{eq.step1} explicitly captures this measurement model, where intra-block measurement by $\bm{H}_2$ is followed by block-level measurement by $\bm{H}_1$.

\subsection{Algorithm Development}
The hierarchical view in the Kronecker compressed sensing problem~\eqref{eq.problem_basic} indicates that the sparse vector $\bm x$ can also be recovered in a hierarchical manner, leading to a two-stage recovery approach, as discussed next.

The first step of the algorithm treats $\left(\bm H_2 \bm X \right)^\top$ as unknown and estimates it by solving~\eqref{eq.step1}. Also, $\bm H_2 \bm X$ exhibits a column-wise sparsity pattern, i.e., a nonzero column of $\bm H_2 \bm X$ corresponds to a nonzero block of $\bm x$ while a zero column corresponds to a zero block. Thus, $\left(\bm H_2 \bm X \right)^\top$ is a row sparse matrix and recovering $\left(\bm H_2 \bm X \right)^\top$ from~\eqref{eq.step1} can be formulated as a multiple measurement vector (MMV) problem. It can be solved using any MMV variants of compressed sensing algorithms, such as orthogonal matching pursuit (OMP) or~SBL.

Let the estimate of $\left(\bm H_2 \bm X \right)^\top$ after the first step be $\tilde{\bm X}$. In the next step, we treat $\tilde{\bm X}$ as measurements and recover $\bm X$ from
\begin{equation}\label{eq.step2}
    \tilde{\bm X}^\top = \bm H_2 \bm X + \tilde{\bm N},
\end{equation}
where $\tilde{\bm N}$ represents noise. For standard and hierarchical sparsity models, the support of the different blocks of $\bm{x}$ (or columns of $\bm X$) are different. So, we treat problem~\eqref{eq.step2} as multiple independent single measurement vector (SMV) problems, which can be solved either sequentially or in parallel using any standard compressed sensing algorithm. Nonetheless, for the Kronecker-supported sparsity model, problem~\eqref{eq.step2} is an MMV problem because the support is common across different blocks.  The resulting algorithm, named \underline{T}wo-\underline{S}tage \underline{R}ecovery (TSR), is summarized in Algorithm~\ref{al.deco}. 
\begin{algorithm}[hpt]
\caption{Two-stage sparse recovery}
\label{al.deco}
\begin{algorithmic}[1]
\Statex \textit {\bf Input:} Measurement $\bm y$, dictionaries $\bm H_1\in \mathbb{R}^{M_1\times N_1}$, and $\bm H_2\in \mathbb{R}^{M_2\times N_2}$

\State Re-order $\bm y$ to obtain $\bm Y$
\State Solve~\eqref{eq.step1} to obtain $\tilde{\bm X}$ using any MMV algorithm
\State Solve~\eqref{eq.step2} for $\bm X$ using any recovery algorithm (use MMV variant for Kronecker-supported sparsity model)

\Statex \textit {\bf Output:} Sparse vector $\bm x = \vect (\bm X)$
\end{algorithmic}
\end{algorithm}

We further analyze the complexity of our TSR algorithm to demonstrate the benefit of exploiting the Kronecker structure of $\bm H$ via the hierarchical view. We consider TSR combined with SBL \cite{wipf2004sparse} and MMV-SBL \cite{wipf2007empirical}, and HTP \cite{blanchard2014greedy} as sparse recovery algorithms, referred to as TSSBL, TSMSBL, and TSHTP, respectively. Assume $M_1$ and $M_2$ are $\mathcal{O}(M)$, $N_1$ and $N_2$ are $\mathcal{O}(N)$, and $M<N$. Table \ref{tab.complexity} compares the time and space complexity of our algorithms with other state-of-the-art algorithms, including SBL for standard sparsity, HiHTP \cite{roth2020reliable} for $(s_1, s_2)$-hierarchical sparsity, and AM- and SVD-KroSBL \cite{he2023bayesian} for Kronecker-supported sparsity. The results against SBL-based methods indicate that TSSBL exhibits lower time and space complexity. Since there is no SBL variant for the hierarchical sparsity model, we compare our method with HiHTP. Although TSSBL has slightly higher complexity than HiHTP, our TSHTP's complexity is lower, and we can also trade-off between time and space complexity since multiple SMV in~\eqref{eq.step2} can be handled sequentially or in parallel. Also, HTP-based algorithms' complexity depends on true sparsity levels $s_1$ and $s_2$. If these values are unknown, additional iterations from inaccurate thresholding may increase complexity due to suboptimal convergence.  

\begin{table}
\centering
\scriptsize
\caption{Complexity of different algorithms in different sparse recovery problems.
}
\begin{tabular}{l|l|l}
\hline
Method     & Time Complexity & Space Complexity \\ \hline\hline
\multicolumn{3}{c}{Recovery of $s$-sparse vectors}   \\ \hline
TSSBL       & $\mathcal{O}\big(N^3M\big)$   & $\mathcal{O}(N^2)$       \\ \hline
SBL  & $\mathcal{O}\big( N^4M^2 \big)$  & $\mathcal{O}(N^4)$         \\ \hline\hline
\multicolumn{3}{c}{Recovery of $(s_1,s_2)$-hierarchically sparse vectors}   \\ \hline
TSSBL       & $\mathcal{O}\big(N^3M\big)$   & $\mathcal{O}(N^2)$       \\ \hline
TSHTP (sequential)       & $\mathcal{O}\big(MN^2+s_1M^2+s_2M^2N\big)$   & $\mathcal{O}(MN)$       \\ \hline
TSHTP (parallel)       & $\mathcal{O}\big(M^2N + (s_1+s_2)M^2\big)$   & $\mathcal{O}(MN^2)$       \\ \hline
HiHTP \cite{roth2020reliable}  & $\mathcal{O}\big(s_1s_2 M^4 + (MN)^2\big)$  & $\mathcal{O}((MN)^2)$         \\ \hline \hline
\multicolumn{3}{c}{Recovery of $(s_1,s_2)$-Kronecker-supported sparse vectors}   \\ \hline
TSMSBL       & $\mathcal{O}\big(N^2M + N^3)\big)$   & $\mathcal{O}(N^2)$       \\ \hline
KroSBL \cite{he2023bayesian} & $\mathcal{O}\big(N^{3} +(MN)^2\big)$  & $\mathcal{O}((MN)^2)$         \\ \hline
\end{tabular}
\label{tab.complexity}
\end{table}

\section{Unified RIP Analysis For Structured Sparsity Models}

Our two-stage recovery approach suggests that the key factor for recovery is not the sparsity level $\|\bm x\|_0$, but the maximum sparsity level of different blocks, $\|\bm x_i\|_0$, and the number of nonzero blocks. We can leverage this formulation to unify the analysis for three sparsity models: standard, hierarchical, and Kronecker-supported sparsity. For this, we first introduce a generalized notion of RIP called the $\mathcal{S}$-RIP condition, where $\mathcal{S}$ is the set of sparse vectors under a given sparsity model. 

\begin{defi}
    A matrix $\bm H$ satisfies the $\mathcal{S}$-RIP, if there exists a constant $\delta\in (0,1)$ such that 
    \begin{equation}\label{def.set_RIC}
        (1-\delta) \|\bm x\|_2^2 \leq \|\bm H\bm x\|_2^2 \leq (1+\delta)\|\bm x\|_2^2,
    \end{equation}
holds for all vectors $\bm x\in \mathcal{S}$. The smallest feasible $\delta$, denoted as $\delta_\mathcal{S}(\bm H)$, is the $\mathcal{S}$-RIC of $\bm H$.
\end{defi}
Under our models, $\mathcal{S}$ is a union of subspaces. So, our $\mathcal{S}$-RIP is related to bi-Lipschitz condition in \cite{blumensath2011sampling} and can guarantee the success of compressed sensing algorithm, such as iterative hard thresholding \cite{blumensath2011sampling}. We skip the details here, but see \cite{eldar2009robust} for a discussion RIP-based conditions and structured sparsity.

By changing $\mathcal{S}$, we derive the different sparsity models. For example, if $\mathcal{S}$ is the set for all $s$-sparse vectors, it reduces to the standard $s$-RIP condition. We denote the standard RIC of a given matrix $\bm H$ as $\delta_{s}(\bm H)$. The next result presents an upper bound for $\delta_\mathcal{S}(\bm H)$, based on the standard RICs of $\bm H_1$ and $\bm H_2$.

\begin{thm}\label{thm.basic}
    For the Kronecker-structured dictionary $\bm H = \bm H_1 \otimes \bm H_2$ and set $\mathcal{S}\subseteq\mathbb{R}^{\bar{N}}$, the $\mathcal{S}$-RIC of $\bm H$ satisfies 
    \begin{equation}
        \delta_\mathcal{S}(\bm H) \leq \sup_{\bm x \in \mathcal{S}} (1+\delta_{s_1(\bm x)}(\bm H_1))(1+\delta_{s_2(\bm x)}(\bm H_2))-1,\notag
    \end{equation}
    where for any vector $\bm x\in\mathcal{S}$, the term $s_1(\bm x)$ is the number of nonzero blocks in $\bm x$ when we partition into $N_1$ blocks with each length $N_2$, and $s_2(\bm x)=\max_{i} \|\bm x_i\|_0$ where $\bm x_i$ represents the $i$th block of $\bm x$.    
\end{thm}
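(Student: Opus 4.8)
The plan is to lift the problem to the matrix domain using the reformulation of Section~\ref{sec.hir_view} and then to factor the action of $\bm{H}$ into an intra-block stage governed by $\bm{H}_2$ and a block-level stage governed by $\bm{H}_1$, applying the standard RIP at each stage. Fix $\bm{x}\in\mathcal{S}$ and let $\bm{X}\in\mathbb{R}^{N_2\times N_1}$ be its rearrangement, so that $\|\bm{x}\|_2^2 = \|\bm{X}\|_F^2$ and, by the identity $\vect(\bm{H}_2\bm{X}\bm{H}_1^\top) = \bm{H}\bm{x}$, also $\|\bm{H}\bm{x}\|_2^2 = \|\bm{H}_2\bm{X}\bm{H}_1^\top\|_F^2 = \|\bm{H}_1(\bm{H}_2\bm{X})^\top\|_F^2$. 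Write $\bm{Z} = \bm{H}_2\bm{X}$.

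First I would treat the intra-block stage. The $i$th column of $\bm{Z}$ equals $\bm{H}_2\bm{x}_i$, and $\|\bm{x}_i\|_0\le s_2(\bm{x})$ by definition of $s_2(\bm{x})$, so the standard RIP of $\bm{H}_2$ with constant $\delta_{s_2(\bm{x})}(\bm{H}_2)$ applies to each block. Writing the two-sided inequality for every $i$ and summing over the $N_1$ blocks (the constant factors out of the sum) gives
\[
 (1-\delta_{s_2(\bm{x})}(\bm{H}_2))\,\|\bm{X}\|_F^2 \;\le\; \|\bm{Z}\|_F^2 \;\le\; (1+\delta_{s_2(\bm{x})}(\bm{H}_2))\,\|\bm{X}\|_F^2 .
\]

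Next I would treat the block-level stage, where the key structural observation is that a zero block of $\bm{x}$ produces a zero column of $\bm{Z}$, so $\bm{Z}$ has at most $s_1(\bm{x})$ nonzero columns; equivalently every column of $\bm{Z}^\top$ is supported on the common index set of nonzero blocks, of size at most $s_1(\bm{x})$, and is therefore $s_1(\bm{x})$-sparse even though it need not be sparse \emph{coordinatewise} within that set. Applying the standard RIP of $\bm{H}_1$ with constant $\delta_{s_1(\bm{x})}(\bm{H}_1)$ to each of the $M_2$ columns of $\bm{Z}^\top$ and summing yields
\[
 (1-\delta_{s_1(\bm{x})}(\bm{H}_1))\,\|\bm{Z}\|_F^2 \;\le\; \|\bm{H}\bm{x}\|_2^2 \;\le\; (1+\delta_{s_1(\bm{x})}(\bm{H}_1))\,\|\bm{Z}\|_F^2 .
\]
Chaining the two two-sided bounds and using $\|\bm{X}\|_F^2 = \|\bm{x}\|_2^2$ gives, with $a_{\bm{x}} = \delta_{s_1(\bm{x})}(\bm{H}_1)$ and $b_{\bm{x}} = \delta_{s_2(\bm{x})}(\bm{H}_2)$,
\[
 (1-a_{\bm{x}})(1-b_{\bm{x}})\,\|\bm{x}\|_2^2 \;\le\; \|\bm{H}\bm{x}\|_2^2 \;\le\; (1+a_{\bm{x}})(1+b_{\bm{x}})\,\|\bm{x}\|_2^2 .
\]

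Finally I would convert this into the claimed bound on $\delta_{\mathcal{S}}(\bm{H})$. By definition of the $\mathcal{S}$-RIC it suffices to exhibit one $\delta$ for which $1-\delta \le \|\bm{H}\bm{x}\|_2^2/\|\bm{x}\|_2^2 \le 1+\delta$ for all nonzero $\bm{x}\in\mathcal{S}$, and the displayed estimate shows that
\[
 \delta \;=\; \sup_{\bm{x}\in\mathcal{S}} \max\bigl\{(1+a_{\bm{x}})(1+b_{\bm{x}})-1,\; 1-(1-a_{\bm{x}})(1-b_{\bm{x}})\bigr\}
\]
is admissible. Since $a_{\bm{x}},b_{\bm{x}}\ge 0$, one has $(1+a_{\bm{x}})(1+b_{\bm{x}})-1 = a_{\bm{x}}+b_{\bm{x}}+a_{\bm{x}}b_{\bm{x}} \ge a_{\bm{x}}+b_{\bm{x}}-a_{\bm{x}}b_{\bm{x}} = 1-(1-a_{\bm{x}})(1-b_{\bm{x}})$, so the first term always dominates the maximum and the bound collapses to the statement of the theorem. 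I expect the only genuine obstacle to be the structural bookkeeping in the block-level stage — namely recognizing that $\bm{Z}=\bm{H}_2\bm{X}$, though generically dense within its nonzero columns, still has column support of size $s_1(\bm{x})$, so that the standard RIP of $\bm{H}_1$ is exactly the right tool; the rest is routine summation of scalar inequalities together with the elementary fact that the multiplicative upper deviation dominates the lower one.
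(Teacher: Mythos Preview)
Your proof is correct and follows essentially the same approach as the paper: rewrite $\|\bm H\bm x\|_2^2$ as $\|\bm H_1(\bm H_2\bm X)^\top\|_F^2$, apply the standard RIP of $\bm H_2$ column-wise to $\bm X$ and of $\bm H_1$ column-wise to $(\bm H_2\bm X)^\top$, chain the two-sided bounds, and observe that the upper deviation dominates the lower one. The only cosmetic difference is that the paper applies the $\bm H_1$ bound first and the $\bm H_2$ bound second, whereas you reverse the order; the resulting inequalities are identical.
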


\begin{proof}
    
For any $\bm x\in\mathcal{S}$, we note that \eqref{def.set_RIC} bounds $\|\bm H\bm x\|_2^2=\|\bm H_1\left(\bm H_2\bm X\right)^\top\|_\mathrm{F}^2$, 
where $\bm x = \vect (\bm X)$ and the $i$th column of $\bm X$ is $\bm x_i$. We look into $\bm H_2\bm X$ first. Since there are only $s_1(\bm x)$ among $N_1$ blocks of $\bm x$ are nonzeros, the matrix $\bm H_2\bm X$ has at most $s_1(\bm x)$ nonzero columns. Hence, every column of $\left(\bm H_2\bm X\right)^\top$ has at most $s_1(\bm x)$ non-zero entries. Using the standard RIC of $\bm H_1$, 
\begin{equation}\label{eq.ric1}
    (1-\delta_{s_1(\bm x)}) \|\bm H_2\bm X\|_\mathrm{F}^2 \leq \|\bm H_1\left(\bm H_2\bm X\right)^\top\|_\mathrm{F}^2 \leq (1+\delta_{s_1}) \|\bm H_2\bm X\|_\mathrm{F}^2.
\end{equation}
Further, since there are at most $s_2(\bm x)$ non-zero entries in each column of $\bm X$, we derive
\begin{equation}\label{eq.ric2}
    (1-\delta_{s_2(\bm x)})\|\bm X\|_\mathrm{F}^2 \leq \|\bm H_2\bm X\|_\mathrm{F}^2 \leq (1+\delta_{s_2(\bm x)})\|\bm X\|_\mathrm{F}^2.
\end{equation}
Combining~\eqref{eq.ric1} and~\eqref{eq.ric2}, we conclude
\begin{multline*}
    (1-\delta_{s_1(\bm x)}) (1-\delta_{s_2(\bm x)})\|\bm X\|_\mathrm{F}^2 \leq \|\bm H_1\left(\bm H_2\bm X\right)^\top\|_\mathrm{F}^2 \\\leq (1+\delta_{s_1(\bm x)}) (1+\delta_{s_2(\bm x)})\|\bm X\|_\mathrm{F}^2.
\end{multline*}
Since $\|\bm H\bm x\|_2^2=\|\bm H_1\left(\bm H_2\bm X\right)^\top\|_\mathrm{F}^2$ and $\|\bm x\|_2^2=\|\bm X\|_\mathrm{F}^2$, we get 
\begin{align*}\label{eq.sup_max}
    \delta_{\mathcal{S}}(\bm H)&\leq \sup_{\bm x \in \mathcal{S}}\max \{1-(1-\delta_{s_1(\bm x)}) (1-\delta_{s_2(\bm x)}),\\
    &\hspace{2.5cm}(1+\delta_{s_1(\bm x)}) (1+\delta_{s_2(\bm x)})-1\}\\
    &=\sup_{\bm x \in \mathcal{S}}(1+\delta_{s_1(\bm x)}) (1+\delta_{s_2(\bm x)})-1,
\end{align*}
    which completes the proof.   
\end{proof}
We next derive the RICs for the standard $s$-sparsity, $(s_1,s_2)$-hierarchical sparsity, and $(s_1,s_2)$-Kronecker-supported sparsity by changing the definitions of $\mathcal{S}$ in Theorem \ref{thm.basic} and compare them with the existing results. We start with the standard sparsity model where $\mathcal{S}$ is the set of $s$-sparse vectors.

\begin{coro}\label{coro.s-sparse}
For the sparsity level $s$, the $s$-RIC of a Kronecker-structured matrix $\bm H = \bm H_1 \otimes \bm H_2$ can be bounded as
    \begin{equation*}
        \delta_s(\bm H) \leq \max_{1\leq s_1\leq s} (1+\delta_{s_1}(\bm H_1))(1+\delta_{s+1-s_1}(\bm H_2))-1. 
    \end{equation*}
\end{coro}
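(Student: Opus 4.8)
The plan is to instantiate Theorem~\ref{thm.basic} with $\mathcal{S}$ taken to be the set of all $s$-sparse vectors in $\mathbb{R}^{\bar N}$ (so that $\delta_{\mathcal{S}}(\bm H)=\delta_s(\bm H)$), and then to replace the supremum over the infinite set $\mathcal{S}$ by a finite maximum over the possible number of nonzero blocks. First I would fix an arbitrary nonzero $s$-sparse $\bm x$ and set $k = s_1(\bm x)$, the number of nonzero blocks obtained when partitioning $\bm x$ into $N_1$ blocks of length $N_2$. Since the at most $s$ nonzero entries of $\bm x$ are distributed over exactly $k$ nonzero blocks, each of the other $k-1$ nonzero blocks contains at least one nonzero entry, so the heaviest block obeys
\[
  s_2(\bm x) \;=\; \max_i \|\bm x_i\|_0 \;\leq\; s - (k-1) \;=\; s + 1 - k ,
\]
and clearly $1 \leq k \leq \min(s, N_1)$.

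Next I would invoke the monotonicity of the standard restricted isometry constant, namely $\delta_t(\bm H_j) \leq \delta_{t'}(\bm H_j)$ whenever $t \leq t'$, to upper bound $\delta_{s_2(\bm x)}(\bm H_2)$ by $\delta_{s+1-k}(\bm H_2)$ inside the bound of Theorem~\ref{thm.basic}. This yields, for every $s$-sparse $\bm x$,
\[
  (1+\delta_{s_1(\bm x)}(\bm H_1))(1+\delta_{s_2(\bm x)}(\bm H_2)) - 1 \;\leq\; (1+\delta_{k}(\bm H_1))(1+\delta_{s+1-k}(\bm H_2)) - 1 .
\]
Taking the supremum over $\bm x \in \mathcal{S}$ then reduces to taking the maximum of the right-hand side over $k \in \{1,\dots,\min(s,N_1)\}$ (the zero vector contributes the value $0$ and is irrelevant); enlarging the index range to $\{1,\dots,s\}$, with the convention $\delta_t(\bm H_1)=\delta_{N_1}(\bm H_1)$ for $t>N_1$, only relaxes the bound. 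Combining this with Theorem~\ref{thm.basic} gives exactly $\delta_s(\bm H) \leq \max_{1\leq s_1 \leq s}(1+\delta_{s_1}(\bm H_1))(1+\delta_{s+1-s_1}(\bm H_2))-1$.

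The argument is essentially a bookkeeping specialization of the already-proved Theorem~\ref{thm.basic}, so there is no deep obstacle; the one step that requires care is the counting estimate $s_2(\bm x) \leq s+1-s_1(\bm x)$ combined with monotonicity of $\delta_\cdot(\bm H_2)$, which formalizes the trade-off that making a single block heavier necessarily forces fewer nonzero blocks for a fixed total sparsity budget $s$. I would also remark briefly on the boundary cases $s>N_1$ or $s+1-s_1>N_2$, where the relevant constants saturate at $\delta_{N_1}(\bm H_1)$ or $\delta_{N_2}(\bm H_2)$, so the stated maximum is still a valid (if loose) upper bound.
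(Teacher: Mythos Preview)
Your proposal is correct and follows essentially the same route as the paper: specialize Theorem~\ref{thm.basic} to the set of $s$-sparse vectors, observe the counting constraint $s_1(\bm x)+s_2(\bm x)\leq s+1$, and use monotonicity of the standard RIC to reduce the supremum to a finite maximum over $s_1$. Your presentation is in fact a bit more careful than the paper's (you make the monotonicity step and the boundary cases explicit), but the argument is the same.
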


\begin{proof}
    We derive the result from Theorem \ref{thm.basic} by setting $\mathcal{S}$ as the set of $s$-sparse vectors. Using the notation used in Theorem \ref{thm.basic}, for any $s$-sparse vector $\bm x$, if $\max_{i} \|\bm x_i\|_0=s_2(\bm x)$, then there are at most $s-s_2(\bm x)+1$ nonzero blocks in $\bm x$. We hence have $s_1(\bm x) \leq s-s_2(\bm x)+1$. Thus, $\delta_s(\bm H)$ can be bound by the maximum value of $(1+\delta_{s_1}(\bm H_1))(1+\delta_{s_2}(\bm H_2))-1$ over all possible options of $s_1$ and $s_2$ satisfying the constraint, $s_1(\bm x) + s_2(\bm x) \leq s+1$. Consequently, we arrive at  
    the desired result.
\end{proof}
Our bound in Corollary \ref{coro.s-sparse} is no worse than the existing bound for $\delta_s(\bm H)$ in the literature~\cite{duarte2011kroneckertech} because
\begin{align}
\max_{1\leq s_1\leq s} (1+\delta_{s_1}(\bm H_1))(1+\delta_{s+1-s_1}(\bm H_2))-1\notag\\
    &\hspace{-4.5cm} \leq \max_{1\leq s_1\leq s} (1+\delta_{s_1}(\bm H_1))\max_{1\leq s_1\leq s}(1+\delta_{s+1-s_1}(\bm H_2))-1\notag\\
    &\hspace{-4.5cm}\leq (1+\delta_s(\bm H_1))(1+\delta_s(\bm H_2))-1,\label{eq.exist}
\end{align}
 since $\delta_s(\cdot)$ is a non-decreasing function of $s$ and \eqref{eq.exist} is the existing bound.
Further, if $\bm H_1=\bm H_2$, we have
\begin{align*}
\max_{1\leq s_1\leq s} (1+\delta_{s_1}(\bm H_1))(1+\delta_{s+1-s_1}(\bm H_1))-1\\
    &\hspace{-4.5cm} = \max_{1\leq s_1\leq \lceil s/2\rceil} (1+\delta_{s_1}(\bm H_1))(1+\delta_{s+1-s_1}(\bm H_1))-1\\
    &\hspace{-4.5cm} \leq (1+\delta_{\lceil s/2\rceil}(\bm H_1))(1+\delta_{s}(\bm H_1))-1,
\end{align*}
making our bound better than the existing bound.

Corollary \ref{coro.s-sparse} also corroborates that that only the number of nonzero blocks and the maximum number of nonzeros in each block $\max_{i} \|\bm x_i\|_0$ affect the $s$-RIC of Kronecker-structured $\bm H$.
The intra-block measuring of $\bm H_2$ and the block-level measuring of $\bm H_1$ collectively contribute to the deviation $\left|\|\left(\bm H_1 \otimes \bm H_2\right)\bm x\|_2^2-\|\bm x\|_2^2\right|$, reflected by the coefficients $\delta_{s_1}$ and $\delta_{s_2}$ in Corollary~\ref{coro.s-sparse}.

Next, we look at the hierarchical sparsity model, by choosing $\mathcal{S}$ as the set of all $(s_1,s_2)$-hierarchically sparse vectors.
\begin{coro}
   Consider the Kronecker-structured dictionary $\bm H = \bm H_1 \otimes \bm H_2$. For $(s_1,s_2)$-hierarchically sparse vectors, the $(s_1,s_2)$-RIC of $\bm H$, i.e., $\delta_{(s_1,s_2)}(\bm H)$, can be bounded as
    \begin{equation}
        \delta_{(s_1,s_2)}(\bm H) \leq (1+\delta_{s_1}(\bm H_1))(1+\delta_{s_2}(\bm H_2))-1,\notag
    \end{equation}
\end{coro}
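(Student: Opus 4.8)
The plan is to derive this corollary as a direct specialization of Theorem~\ref{thm.basic} by choosing $\mathcal{S}$ to be the set of all $(s_1,s_2)$-hierarchically sparse vectors, exactly parallel to the proof of Corollary~\ref{coro.s-sparse}. First I would recall the definitions: for $\bm{x}\in\mathcal{S}$ partitioned into $N_1$ blocks of length $N_2$, the term $s_1(\bm x)$ counts the nonzero blocks and $s_2(\bm x)=\max_i\|\bm x_i\|_0$. For an $(s_1,s_2)$-hierarchically sparse vector, the defining property is precisely that at most $s_1$ of the $N_1$ blocks are nonzero and each nonzero block is $s_2$-sparse, so $s_1(\bm x)\le s_1$ and $s_2(\bm x)\le s_2$ for every $\bm x\in\mathcal{S}$.

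Next I would plug these bounds into the conclusion of Theorem~\ref{thm.basic}. Since $\delta_s(\cdot)$ is a non-decreasing function of $s$ (the same monotonicity fact invoked after Corollary~\ref{coro.s-sparse}), we have $\delta_{s_1(\bm x)}(\bm H_1)\le\delta_{s_1}(\bm H_1)$ and $\delta_{s_2(\bm x)}(\bm H_2)\le\delta_{s_2}(\bm H_2)$ for all $\bm x\in\mathcal{S}$. Hence
\begin{equation*}
    \sup_{\bm x\in\mathcal{S}}(1+\delta_{s_1(\bm x)}(\bm H_1))(1+\delta_{s_2(\bm x)}(\bm H_2))-1 \le (1+\delta_{s_1}(\bm H_1))(1+\delta_{s_2}(\bm H_2))-1,
\end{equation*}
and combining this with Theorem~\ref{thm.basic} gives $\delta_{(s_1,s_2)}(\bm H)\le(1+\delta_{s_1}(\bm H_1))(1+\delta_{s_2}(\bm H_2))-1$, which is the claim.

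There is essentially no obstacle here — the corollary is immediate once the hierarchical sparsity structure is matched against the two-level counting quantities $s_1(\bm x),s_2(\bm x)$ appearing in Theorem~\ref{thm.basic}. The only point requiring a sentence of care is observing that, unlike the standard $s$-sparse case where $s_1(\bm x)$ and $s_2(\bm x)$ trade off against each other subject to $s_1(\bm x)+s_2(\bm x)\le s+1$, the hierarchical model imposes the two bounds $s_1(\bm x)\le s_1$ and $s_2(\bm x)\le s_2$ independently; this is why the supremum collapses to a single product rather than a maximum over a one-parameter family, and it is also what makes the bound tighter in structure than a naive application. I would close by remarking (optionally) that this recovers the known RIP bound for hierarchically sparse recovery with Kronecker dictionaries in~\cite{roth2020reliable}.
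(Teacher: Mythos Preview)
Your proposal is correct and matches the paper's approach exactly: the paper derives this corollary simply by taking $\mathcal{S}$ in Theorem~\ref{thm.basic} to be the set of $(s_1,s_2)$-hierarchically sparse vectors, and your use of $s_1(\bm x)\le s_1$, $s_2(\bm x)\le s_2$ together with monotonicity of $\delta_s(\cdot)$ is precisely the intended (and only) step.
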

We note that this result is identical to the result in \cite{roth2020reliable}. Further, the RIC of Kronecker-structured $\bm H$ for $(s_1,s_2)$-Kronecker-supported sparse vector $\bm x$ is as follows.
\begin{coro}
    Consider the Kronecker-structured dictionary $\bm H = \bm H_1 \otimes \bm H_2$. For Kronecker-supported sparse vector $\bm x$ where there are $s_1$ nonzero blocks with each $s_2$-sparse, the Kronecker-supported-RIC of $\bm H$ can be bounded as
    \begin{equation}
        \delta_{s_1,s_2}^{\mathrm{KS}}(\bm H) \leq (1+\delta_{s_1}(\bm H_1))(1+\delta_{s_2}(\bm H_2))-1.\notag
    \end{equation}
\end{coro}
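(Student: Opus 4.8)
The plan is to read this off from Theorem~\ref{thm.basic} once more, this time taking $\mathcal{S}$ to be the set of $(s_1,s_2)$-Kronecker-supported sparse vectors. The first step is to pin down the two quantities $s_1(\bm x)$ and $s_2(\bm x)$ from Theorem~\ref{thm.basic} for vectors in this $\mathcal{S}$. By definition, $\bm x\in\mathcal{S}$ exactly when $\supp(\bm x)=\bm u\otimes\bm v$ for binary vectors $\bm u\in\{0,1\}^{N_1}$, $\bm v\in\{0,1\}^{N_2}$ with $\|\bm u\|_0=s_1$ and $\|\bm v\|_0=s_2$. Partitioning $\bm x$ into $N_1$ blocks of length $N_2$ as in Theorem~\ref{thm.basic}, the $i$th block is nonzero precisely when $u_i=1$, so there are exactly $s_1(\bm x)\le s_1$ nonzero blocks, and every nonzero block has support equal to $\supp(\bm v)$, so $\|\bm x_i\|_0\le s_2$ for all $i$, i.e.\ $s_2(\bm x)=\max_i\|\bm x_i\|_0\le s_2$.

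With these bounds the conclusion is immediate. Theorem~\ref{thm.basic} gives
\[
\delta^{\mathrm{KS}}_{s_1,s_2}(\bm H)\le \sup_{\bm x\in\mathcal{S}}(1+\delta_{s_1(\bm x)}(\bm H_1))(1+\delta_{s_2(\bm x)}(\bm H_2))-1,
\]
and since $\delta_s(\cdot)$ is nonnegative and nondecreasing in $s$, each factor on the right is maximized by substituting $s_1$ for $s_1(\bm x)$ and $s_2$ for $s_2(\bm x)$; passing to the supremum yields the claimed bound. Equivalently, and perhaps more cleanly to present, one can note that any $(s_1,s_2)$-Kronecker-supported sparse vector is in particular $(s_1,s_2)$-hierarchically sparse (a common support across the nonzero blocks is a special case of ``each nonzero block is $s_2$-sparse''), so $\mathcal{S}$ is contained in the set used in the previous corollary; since the $\mathcal{S}$-RIC is a supremum over $\bm x\in\mathcal{S}$, monotonicity of this supremum under set inclusion gives $\delta^{\mathrm{KS}}_{s_1,s_2}(\bm H)\le\delta_{(s_1,s_2)}(\bm H)$, and the bound follows from the hierarchical corollary.

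I do not expect a genuine obstacle: all the content lies in correctly identifying $s_1(\bm x)$ and $s_2(\bm x)$ for the Kronecker-supported model, after which the result is a one-line consequence of Theorem~\ref{thm.basic} together with the monotonicity of the standard RIC. The only point to handle with a little care is not to confuse the block count $s_1(\bm x)$ with the parameter $s_1$ — under the strict convention $\|\bm u\|_0=s_1$ the two coincide, but the inequality in Theorem~\ref{thm.basic} accommodates either reading — and likewise to observe that the upper bound in~\eqref{def.set_RIC}, rather than the lower, is the binding one, exactly as in the proof of Theorem~\ref{thm.basic}.
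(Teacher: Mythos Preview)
Your proposal is correct and matches the paper's approach: the paper presents this corollary as an immediate specialization of Theorem~\ref{thm.basic} obtained by taking $\mathcal{S}$ to be the $(s_1,s_2)$-Kronecker-supported sparse vectors, and your identification of $s_1(\bm x)\le s_1$, $s_2(\bm x)\le s_2$ together with the monotonicity of the standard RIC is exactly the intended argument. Your alternative observation via the inclusion $\mathcal{S}\subseteq\{(s_1,s_2)\text{-hierarchically sparse vectors}\}$ is also in the spirit of the paper, which explicitly notes that the Kronecker-supported model is ``more structured'' than the hierarchical one yet yields the same bound.
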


Surprisingly, the upper bound of $\delta_{s_1,s_2}^{\mathrm{KS}}$ matches $\delta_{(s_1,s_2)}$, despite the Kronecker-supported sparse vector $\bm x$ being more ``structured" than the $(s_1,s_2)$-hierarchically sparse vector. To clarify, we recall the additional structure in the Kronecker-supported sparsity model arises from the joint sparsity of the nonzero blocks of $\bm x$, aligning \eqref{eq.step2} with the MMV model. However, for $\bm{H}_2\bm{X}$, the $(s_1,s_2)$-hierarchically sparse vector does not necessarily maintain the same support across nonzero blocks, making \eqref{eq.step2} a collection of SMV problems. As noted in \cite{li2013structured,eldar2009robust}, RIP analysis considers worst-case performance and does not guarantee that the standard MMV measurement model outperforms the SMV case. Consequently, our bound for $\delta^{\mathrm{KS}}$ offers no improvement, and to the best of our knowledge, establishing a stronger RIP-based condition for the MMV model remains an open problem in the literature.

\begin{figure*}
\centering
  \subcaptionbox{Regular $s$-sparse vector.\label{fig.s}}{\includegraphics[width=0.32\linewidth]{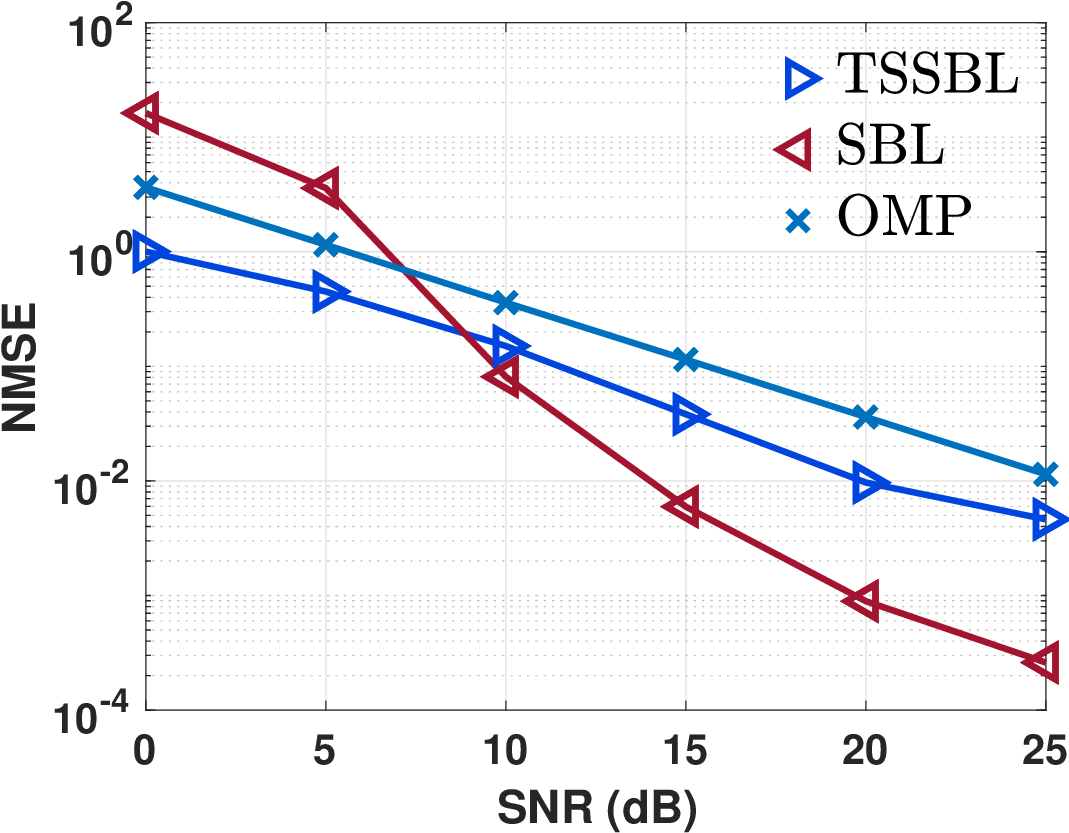}}
  \subcaptionbox{Hierarchically sparse vector.\label{fig.hi}}{\includegraphics[width=0.32\linewidth]{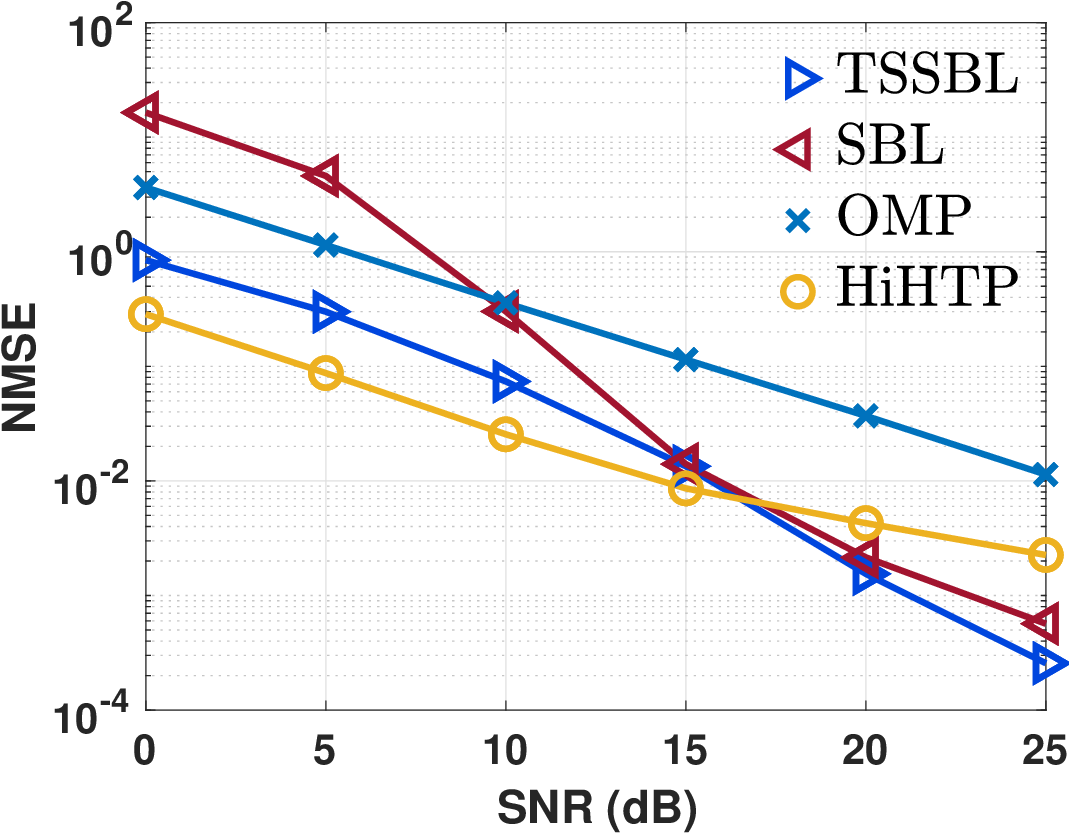}}
  \subcaptionbox{Kronecker-supported sparse vector.\label{fig.kro}}{\includegraphics[width=0.32\linewidth]{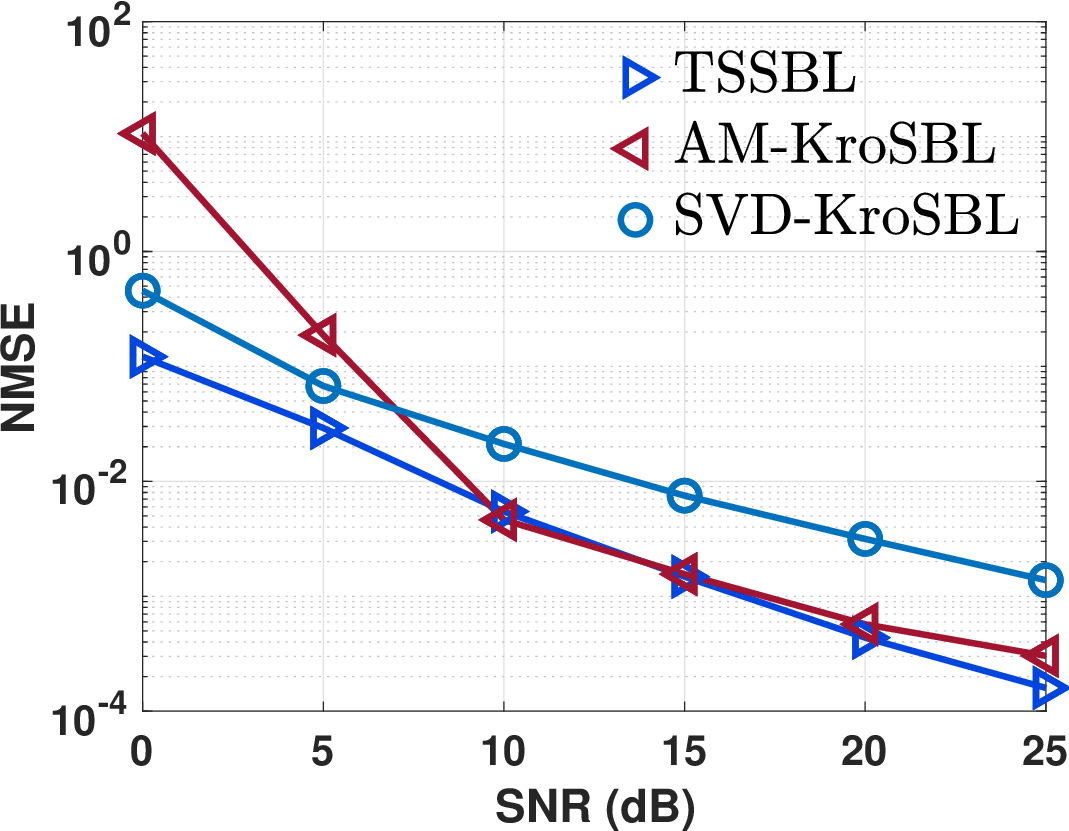}}
  \caption{NMSE performance of different algorithms as functions of SNR in different types of sparse vector recovery setting.}
  \label{fig.snr}
\end{figure*}

\section{Numerical Evaluations}

In this section, we present numerical results for the recovery of regular $s$-sparse vector, $(s_1,s_2)$-hierarchically sparse vector, and $(s_1,s_2)$-Kronecker-supported sparse vector. We use SBL and MMV-SBL \cite{wipf2007empirical} for the compressed sensing algorithm in Algorithm \ref{al.deco}. Our benchmarks for the standard sparsity model are SBL~\cite{wipf2004sparse} (state-of-the-art) and OMP. We use HiHTP \cite{roth2020reliable} (state-of-the-art), SBL, and OMP as benchmarks for the hierarchically sparse vector recovery. For $(s_1,s_2)$-Kronecker-structured support sparse vector, we benchmark with AM-KroSBL and SVD-KroSBL (state-of-the-art)~\cite{he2023bayesian}. 

Our setting is as follows. For the Kronecker-structured dictionary $\bm H=\bm H_1 \otimes \bm H_2$, we set $M_1=M_2=30$ and $N_1=N_2=40$. The entries of $\bm H_1$ and $\bm H_2$ and nonzero entries of $\bm x$ are drawn independently from the standard normal distribution. For $s$-sparse vectors, we set $s=15$, and the support is randomly drawn from a uniform distribution. For $(s_1,s_2)$-hierarchically and $(s_1,s_2)$-Kronecker-supported sparse vectors, we opt for $s_1 = s_2 = 5$. Here, supports are generated by first selecting $s_1$ blocks uniformly at random, then assigning support within each block uniformly. In the hierarchical sparsity model, support varies across blocks, whereas in the Kronecker-supported sparsity model, it remains identical. We adopt the additive white Gaussian noise with zero mean whose variance is determined by $\text{SNR~(dB)} = 10\log_{10}\mathbb{E}\{\|\bm H\bm x\|_2^2/\|\bm n\|_2^2\}$ of $\{0,5,10,15,20,25\}$. Two metrics are considered for performance evaluation: normalized mean squared error (NMSE) and run time. Here, we define 
$\text{NMSE} = \mathbb{E}\left\{{\|\bm x-\hat{\bm x}\|_2^2}/{\|\bm x\|_2^2}\right\},$
where ${\bm x}$ is the ground truth and $\hat{\bm x}$ is the estimated vector. We limit the number of iterations for all SBL-based methods (TSSBL, SBL, MMV-SBL, AM-KroSBL, and SVD-KroSBL) to three hundred. We also prune small entries in hyperparameters for faster convergence for SBL-based algorithms. Simulation results, summarized in Fig. \ref{fig.snr} and Table \ref{tab.time}, are averaged over 500 independent trials.

Fig. \ref{fig.snr} compares the performance of different algorithms under each sparsity model. In Fig. \ref{fig.s}, we compare our TSSBL with SBL and OMP. Our TSSBL outperforms OMP in all cases and outperforms SBL when SNR is low. The worse performance of TSSBL in high SNR cases is because the algorithm can ignore some blocks of $\bm x$ having fewer nonzero entries in the first step when the noise is present. The worse performance of SBL in low SNR case is because the noise estimation step in SBL is not robust in strong noise case \cite{zhang2011sparse} and can lead to degraded performance. We then compare our TSSBL with SBL, OMP, and HiHTP in Fig. \ref{fig.hi}. We note that HiHTP requires the true sparsity levels $s_1$ and $s_2$ as input, which are not always known in real applications. Thus, for a fair comparison, we do not input the true sparsity level but only a rough approximation. Due to the thresholding operator, HiHTP can effectively enforce the hierarchical sparsity, making the algorithm more robust against strong noise. However, when SNR increases, the drawback of requiring accurate sparsity level prevails, hence our TSSBL dominates. The comparison between our TSSBL and AM-/SVD-KroSBL is shown in Fig. \ref{fig.kro}, where it shows that our TSSBL is able to perform similarly or better for all SNR values. 

Furthermore, from Table \ref{tab.time}, we see that the run time our TSSBL is at least one order less than the other candidates. We attribute this to exploiting the Kronecker structure of $\bm H$. Overall, our TSSBL offers comparable performance with a considerably shorter run time than the state-of-the-art methods.

\begin{table}[t]
\centering
\scriptsize
\caption{Runtime in seconds. \textbf{Bold}: the best result.}
\begin{tabular}{l|c|c|c|c|c|c}
\hline
SNR & 0 dB     & 5 dB   & 10 dB   & 15 dB   & 20 dB   & 25 dB   \\ \hline
\hline
\multicolumn{7}{c}{Recovery of $s$-sparse vectors}   \\ \hline
TSSBL  & \textbf{0.122} & \textbf{0.082} & \textbf{0.040} & \textbf{0.022} & \textbf{0.015} & \textbf{0.010} \\ \hline
OMP  & 0.606	& 0.604	& 0.598	& 0.602	& 0.600	& 0.600 \\ \hline
SBL  & 3.676  & 4.174 & 2.526 & 1.233 & 0.986 & 0.923 \\ \hline
\hline
\multicolumn{7}{c}{Recovery of $(s_1,s_2)$-hierarchically sparse vectors}   \\ \hline
TSSBL  & \textbf{0.092} & \textbf{0.046} & \textbf{0.015} & \textbf{0.009} & \textbf{0.006} & \textbf{0.004} \\ \hline
OMP  & 0.606&	0.606&	0.599& 0.599&	0.598&	0.598 \\ \hline
SBL  & 3.559&	3.814&	3.299&	1.387&	0.977&	0.873 \\ \hline
HiHTP  & 0.821  & 0.830 & 0.840 & 0.812 & 0.826 & 0.808 \\ \hline
\hline
\multicolumn{7}{c}{Recovery of $(s_1,s_2)$-Kronecker-supported sparse vectors}   \\ \hline
TSSBL  & \textbf{0.028} &	\textbf{0.015} & \textbf{0.004} &	\textbf{0.002} &	\textbf{0.001} &	\textbf{0.001} \\ \hline
AM-KroSBL  & 7.572 & 8.145 &	3.990 &	2.142 &	1.263 &	0.815  \\ \hline
SVD-KroSBL  & 4.299 & 1.386 & 0.530 &	0.356 &	0.308 &	0.292 \\ \hline
\end{tabular}
\label{tab.time}
\end{table}

\section{Conclusion}

This work focused on the Kronecker compressed sensing problem with multiple sparsity structures. We first explored the hierarchical view of the Kronecker-structured dictionary. Each factor matrix in the Kronecker-structured dictionary measures the sparse signal at different levels. Based on the hierarchical view, we developed a two-stage sparse recovery algorithm, which offers comparable performance compared with other state-of-the-art algorithms with lower computational complexity. We then unified the RIP analysis of Kronecker-structured matrix with different structured sparsity models. Designing new algorithms through hierarchical view and establishing recovery guarantees are exciting avenues for future research. 

\bibliographystyle{ieeetr}
\bibliography{refs}

\end{document}